\newtheorem{theorem}{Theorem}
\newtheorem{lemma}[theorem]{Lemma}
\theoremstyle{remark}
\newcommand{\ket}[1]{\left|#1\right\rangle}
\newcommand{\bra}[1]{\left\langle#1\right|}
\newcommand{\wsize}{{\rm{wsize}}}
\newcommand{\cT}{\mathcal T}
\newcommand{\cV}{\mathcal V}
\newcommand{\black}{{\rm{black}}}
\newcommand{\red}{{\rm{red}}}
\begin{document}

\title{Simplified Quantum Algorithm for the Oracle Identification Problem}
\author{Leila Taghavi}
\email[E-mail me at: ]{l.taghavi@gmail.com}
\affiliation{QuOne Lab, Phanous Research and Innovation Centre, Tehran, Iran}

\begin{abstract}
In the oracle identification problem  we have oracle access to bits of an unknown string $x$ of length $n$, with the promise that it belongs to a known set $C\subseteq\{0,1\}^n$. The goal is to identify $x$ using as few queries to the oracle as possible. 
We develop a quantum query algorithm for this problem with query complexity
$O\left(\sqrt{\frac{n\log M }{\log(n/\log M)+1}}\right)$, where $M$ is the size of $C$. This bound is already derived by Kothari in 2014, 
for which we provide a more elegant simpler proof. 
\end{abstract}

\maketitle

\section{Introduction}
In the oracle identification problem, we have query access to bits of an unknown $n$-bit string $x$, with the promise that it belongs to a known set $C\subseteq\{0,1\}^n$ of size $M$. We want to determine $x$ while minimizing the number of queries to bits of $x$. We denote this problem by OIP$(C)$.  Here, we are interested in quantum algorithms for this problem in which case we assume we can query bits of $x$ in superposition.
 
We are interested in the query complexity of OIP$(C)$ in the worst case with the promise that $|C|=M$. To this end, for a given $M$ and $n$, 
we say that $Q\big({\rm OIP}(M,n)\big)=q$ if there exists a bounded error quantum query algorithm  that solves any OIP$(C)$ where $|C|=M$
with at most $q$ queries.
We note that for some sets $C$ the quantum query complexity might be less than $Q(\text{OIP}(M,n))$, yet here we consider the worst case query complexity over such choices of $C$.

Characterization of $Q(\text{OIP}(M,n)\big)$ for $M\leq n$ is easy; by reducing the problem to Grover's search on $M$ elements we find that $Q(\text{OIP}(M,n)\big)=O(\sqrt M)$~\cite{CKOR13}. For the hard case of $n<M\leq 2^n$,
Kothari~\cite{Kot14} characterized the quantum query complexity of OIP$(M,n)$ by proposing an algorithm beyond a simple Grover's search.  
This algorithm is based on ideas from classical learning theory in combination with a composition property of the so called filtered $\gamma _2$-norm. Here, we give a simpler direct proof of the same result that eliminates the need for the filtered $\gamma_2$-norm. Our proof is based on the framework of~\cite{BT20} that converts a classical algorithm into an improved quantum one.  

Assume that we have a classical algorithm that computes a function $f:[\ell]^n\to[m]$ with query complexity $T$. Moreover, assume that we have a \emph{guessing algorithm} that tries to predict values of queried bits, making at most $G$ mistakes. Using these classical algorithms, we can design a quantum query algorithm for computing $f$ with query complexity $O(\sqrt{GT})$~\cite{BT20, LL16}. This result in~\cite{BT20} is proven based on the framework of non-binary span programs (NBSP)~\cite{BT19}, which is a generalization of the span programs~\cite{Rei09} for functions with non-binary input and/or output alphabets. In this paper, once again using the framework of NBSPs we prove a generalization of the aforementioned result of~\cite{BT20}. Next, using that generalization we give a simple proof of the following bound on $Q(\text{OIP}(M,n)\big)$.

 \begin{theorem} \label{thm:OIP}
Suppose that we are given a set $C\subseteq \{0,1\}^n$, where $|C|=M$ and $n<M\leq 2^n$. Also suppose that we have query access to bits of a string $x\in \{0,1\}^n$ with the promise that $x\in C$. Then quantum query complexity of identifying $x$ is $O\left(\sqrt{\frac{n\log M }{\log(n/\log M)+1}}\right)$.
 \end{theorem}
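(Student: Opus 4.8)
\textbf{Proof strategy.} The plan is to reduce the theorem to the construction of two classical subroutines and then invoke the generalization of the compilation result of \cite{BT20} that we establish: a classical query algorithm that, on input $x\in C$, outputs $x$ using $T$ queries, together with a guessing algorithm that predicts the value of each bit right before it is queried and errs at most $G$ times on any run. The generalized lemma then yields a bounded-error quantum algorithm for $\mathrm{OIP}(C)$ with $O(\sqrt{GT})$ queries, so it remains to exhibit subroutines with $GT=O\!\left(\frac{n\log M}{\log(n/\log M)+1}\right)$. This is the feature that makes the proof ``simpler'' than that of \cite{Kot14}: no filtered $\gamma_2$-norm, only a learning-theoretic algorithm fed into the NBSP machinery.

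\textbf{The classical learning algorithm and the guesser.} For the classical algorithm I would use the halving (majority-vote) algorithm: maintain the set $S\subseteq C$ of candidates consistent with the answers seen so far (initially $S=C$), query a coordinate at each step, and stop once $|S|=1$, at which point the remaining candidate is $x$. The natural guesser predicts each queried coordinate by the majority value among the current $S$; since a wrong prediction eliminates at least half of $S$, it makes at most $\log_2 M$ mistakes, and trivially $T\le n$. This already gives a quantum algorithm with $O(\sqrt{n\log M})$ queries, matching the claimed bound only when $\log M=\Theta(n)$. To gain the factor $1/\sqrt{\log(n/\log M)+1}$, I would refine the scheme with a granularity parameter $b\approx \log(n/\log M)+1$: group the $n$ coordinates into blocks of size $b$, run the halving algorithm over the alphabet $\{0,1\}^b$ (a ``query'' reads a whole block, a ``mistake'' is a wrongly guessed block, still at most $\log_2 M$ of them), and arrange that in the compiled quantum algorithm a correctly guessed block is \emph{verified} by one Grover search over its $b$ coordinates at cost $O(\sqrt b)$ rather than $O(b)$. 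One then re-optimizes $b$ to balance verifying the $O(n/b)$ blocks against resolving the $O(\log M)$ block-mistakes.

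\textbf{Main obstacle.} I expect the crux to be twofold. First, the generalized compilation lemma must be proved with a \emph{refined} dependence on $G$ and $T$ that is sensitive to the block structure: a direct appeal to the crude $O(\sqrt{GT})$ bound, or to ordinary span-program composition, keeps producing the weaker $O(\sqrt{n\log M})$ and washes out the $b$-dependence, so the NBSP must charge a well-guessed symbol only $O(\sqrt b)$ queries instead of $O(b)$ --- this is precisely where the non-binary input alphabet of \cite{BT19} is used, and where the new content over \cite{BT20} lies. Second, one has to carry out the combinatorial optimization: show that with $b\asymp \log(n/\log M)+1$ the verification cost and the mistake-resolution cost both come out to $O\!\left(\sqrt{n\log M/(\log(n/\log M)+1)}\right)$, tracking how the number of blocks ever examined, the number of block-mistakes, and the per-block Grover cost combine. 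The boundary regimes ($M$ close to $n$, where $b$ is large; $M$ close to $2^n$, where $b=\Theta(1)$ and the bound degrades to the trivial $O(n)$) and the degenerate case where the majority string is not itself in $C$ should be routine once the main estimate is in place.
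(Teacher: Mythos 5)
There is a genuine gap, and it is the central one. Your halving-based guesser only guarantees a factor-$2$ reduction of the candidate set per mistake, \emph{independently of when the mistake occurs}, and no amount of compilation refinement can push that below $O(\sqrt{n\log M})$: even with the strongest per-segment bound (charging a run of $T_g$ correct guesses only $O(\sqrt{T_g})$ queries), halving gives only the constraints $r\le\log_2 M$ and $\sum_i p_i\le n$, under which $\max\sum_i\sqrt{p_i}=\Theta(\sqrt{n\log M})$. The missing ingredient is Heged\H{u}s's lemma (Lemma~\ref{lem:oip-set} in the paper): for any candidate set there is a guess string $s$ and an ordering $\pi$ such that a first mismatch at the $j$-th query shrinks the candidate set by a factor $\max\{2,j\}$, i.e.\ late mistakes are much more informative. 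This is what yields the second constraint $\prod_i\max\{2,p_i\}\le M$ alongside $\sum_i p_i\le n$, and it is exactly this product constraint, fed into the refined bound $\sum_i\sqrt{p_i}$ of Theorem~\ref{thm:binaryClassical2quantumW} and Kothari's optimization (his Appendix~B), that produces the $\frac{1}{\sqrt{\log(n/\log M)+1}}$ saving. Without a position-dependent elimination guarantee your scheme cannot reach the stated bound.

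Your proposed fix via fixed blocks of size $b\approx\log(n/\log M)+1$ also fails quantitatively: Grover-verifying each of the $n/b$ blocks at cost $\Theta(\sqrt{b})$ already costs $\Theta(n/\sqrt{b})$, which exceeds the target $\sqrt{n\log M/b}$ by a factor $\sqrt{n/\log M}$; re-optimizing $b$ to balance verification against the $O(\log M)$ block-mistakes only returns you to $O(\sqrt{n\log M})$ or worse. The paper's ``blocks'' are not of fixed size and are not each verified separately: they are the maximal runs of correct guesses, delimited by the at most $O(\log M)$ mistakes, whose lengths $p_i$ are tied to the shrinkage of $C$ by Heged\H{u}s's lemma. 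Also, the refined compilation statement you would need is precisely Theorem~\ref{thm:binaryClassical2quantumW}, which works with the \emph{binary} input alphabet; the non-binary feature of the NBSP framework is used for the many-valued output (identifying the leaf, i.e.\ the string $x$), and the $O(\sqrt{T_g})$ charging of a correct run comes from the choice of edge weights $W_{\black,b},W_{\red,b}$ in the span program, not from grouping input bits into larger symbols.
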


Note that the bound of this theorem is tight; it is shown by Kothari~\cite{Kot14} that there exists a set $C$ of size $M$ for which $Q\big(\text{OIP}(C)\big)=\Omega\left(\sqrt{\frac{n\log M }{\log(n/\log M)+1}}\right)$.

\section{Classical to quantum query algorithm}\label{sec:classical2quantum}
A classical query algorithm can be modeled using a \emph{decision tree}. A decision tree is a directed acyclic graph which depicts the sequence of queried bits along the algorithm. Any internal node of this tree is labeled by an index $i\in[n]$, and output edges of this node are labeled using possible outcomes of the query of $x_i$. 
Furthermore, every leaf of the decision tree is labeled by an output value of the algorithm. Let $P_x$ be the unique path from the root to a leaf, for which the labels of edges matches the bits of $x$. 

We consider \emph{guessing algorithms} along with decision trees.
A guessing algorithm is a classical algorithm that predicts the values of queries. Such an algorithm can be represented by an edge-coloring of the decision tree. In this coloring we color any edge associated to an output of the guessing algorithm black and color the rest of edges red. We call such a coloring of edges of a decision tree a \emph{G-coloring}. More formally, a G-coloring is a coloring of the edges of the decision tree using two colors black and red in such a way that every node has exactly one outgoing edge with black color. See Figure~\ref{fig:twoZeroesTree} for an example of a decision tree and its G-coloring.

\begin{figure}
  \includegraphics[scale=1,bb=190 0 165 165]{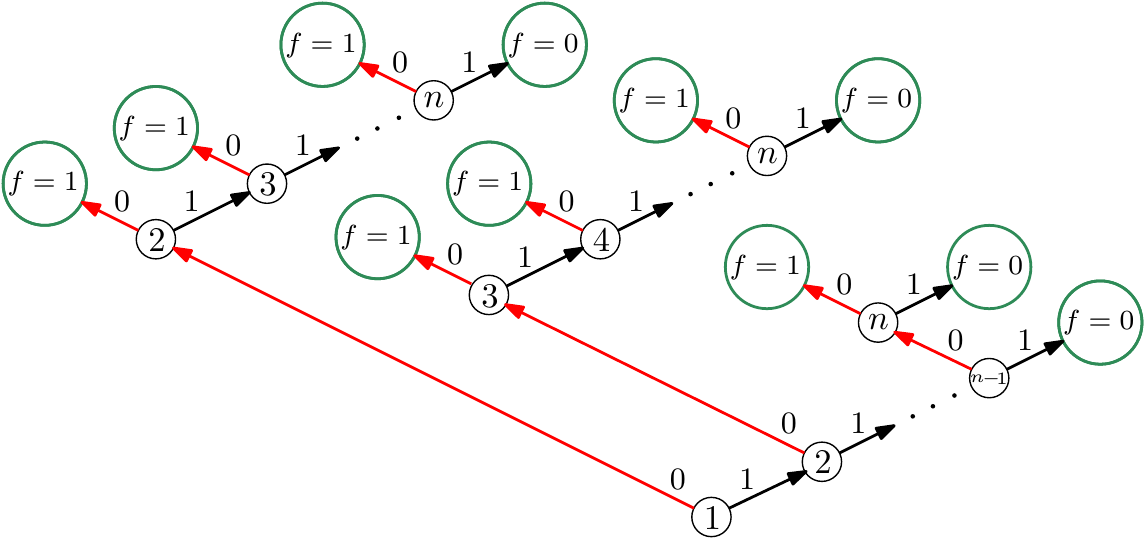}  
  \caption{\label{fig:twoZeroesTree} A decision tree for a function that outputs $1$ if the input string contains at least two $0$s. Any leaf of the tree is labeled with its associated output, and other vertices are labeled with the index that the classical algorithm queries in that vertex. A G-coloring is represented by the edge coloring of the decision tree using two colors black and red. This guessing algorithm always guesses that the output of the queried bit is $1$ (black edges). Edges with label $0$ are colored red and indicate wrong guesses of the guessing algorithm. The depth of this tree is $T=n$ which is the classical query complexity of this algorithm for this function. The maximum number of wrong guesses of the guessing algorithm is $G=2$ since after seeing the first two mistakes (red edges) the output of the function will be revealed.} 
\end{figure}

To prove our main result, here we first give a bound on the quantum query complexity of a problem based on a decision tree and a G-coloring on it. This result is proven based on ideas from~\cite{BT20}.

\begin{theorem}\label{thm:binaryClassical2quantumW}
Assume that we have a decision tree $\mathcal T$ for a function $f:D_f\to [m]$ with $D_f\subseteq \{0,1\}^n$ whose depth is $T$. Furthermore, assume that for a G-coloring of the edges of $\mathcal T$, the number of red edges in each path from the root to leaves of $\mathcal T$ is at most $G$.
Let $G_x$ be the number of red edges in $P_x$, and for $1\leq g\leq G_x$, let $T_{g,x}$ be the number of black edges in $P_x$ after the $g$-th red edge and before the next red one. Also let $T_{0, x}$ be the number of black edges before the first red edge in $P_x$.
 Then there exists a bounded error quantum query algorithm that computes the function $f$ with query complexity 
 \begin{equation}
     O\left(\max_x \sum_{g=0}^{G_x}\sqrt{T_{g,x}}\right).
 \end{equation}
\end{theorem}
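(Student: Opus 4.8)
The plan is to build a non-binary span program (NBSP) from the data of the decision tree $\mathcal T$ together with its G-coloring, and then invoke the general NBSP-to-algorithm theorem so that the witness sizes translate into the claimed query complexity $O\big(\max_x\sum_{g=0}^{G_x}\sqrt{T_{g,x}}\big)$. The key idea, borrowed from~\cite{BT20}, is that the black edges form a ``guessed'' path through the tree: following the guesses costs nothing until we hit a red edge, at which point we must pay to correct the guess. So I would associate a span program vector to each edge of $\mathcal T$, designed so that a positive witness for an input $x$ is supported on the edges of $P_x$, and so that the norm of that witness decomposes along the red edges of $P_x$ into blocks of consecutive black edges.

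The construction I would carry out: first root the tree and think of each leaf as carrying the output label; since the output alphabet $[m]$ is non-binary, use the NBSP formalism of~\cite{BT19} that allows the target vector to depend on the claimed output value. For each internal node $v$ at depth $d$ with query index $i$, and each outcome $b\in\{0,1\}$, put a vector $v_{(v,b)}$ into the ``input'' part of the span program indexed by the pair $(i,b)$, so that it is available exactly when $x_i=b$. Chain these vectors along paths: the vector for the black out-edge of $v$ should be ``free'' in the sense that it can be combined with $v$'s incoming vector at unit cost, whereas the vector for a red out-edge should require a fresh unit of witness norm. Concretely I expect to set up the vectors so that the minimal positive witness for $x$ has squared norm $\sum_{g=0}^{G_x} T_{g,x}$ up to constants — each maximal run of $T_{g,x}$ black edges contributes like a length-$T_{g,x}$ ``Grover segment'' contributing $\sqrt{T_{g,x}}$ to the query complexity (this is exactly the $O(\sqrt{GT})$ phenomenon of~\cite{BT20,LL16}, refined block-by-block), and summing the $\sqrt{T_{g,x}}$ over $g$ gives the bound. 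Dually, I would exhibit a negative witness for every $y\notin f^{-1}(\text{claimed value})$ of bounded size by walking down $P_y$ until it diverges from where a $y$-consistent leaf with that output would lie, and bound its norm by the depth $T$; combined with the positive witness size this yields the stated complexity after taking the geometric-mean / max over inputs that the NBSP theorem provides.

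The main obstacle I anticipate is getting the witness-size bookkeeping to respect the block structure rather than collapsing to the cruder $O(\sqrt{GT})$ estimate. A naive span program that just charges one unit per black edge would give positive witness size $\sqrt{T}$ and negative witness size $\sqrt{G}$, hence $O(\sqrt{GT})$; to get $O(\max_x\sum_g\sqrt{T_{g,x}})$ one must weight the edge vectors non-uniformly — intuitively, scaling the vectors in the $g$-th black run by a factor depending on $T_{g,x}$ so that that run behaves like an independently optimized Grover search. Making this weighting consistent across all root-to-leaf paths simultaneously (a given edge lies on many paths $P_x$ with possibly different downstream block lengths) is the delicate point; I would handle it by defining the weights locally from the subtree structure — e.g. the weight of an edge depends on the length of the black run it belongs to within its subtree — and then verify that for each fixed $x$ the telescoping along $P_x$ reproduces $\sum_g\sqrt{T_{g,x}}$. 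Checking the negative-witness side remains uniform in $T$ and does not interfere with these weights, so the bound on the complexity is the max over $x$ of the positive-witness contribution, as claimed. Finally I would note that the boundedness of the error and the $O(\cdot)$ constants come for free from the black-box NBSP algorithm of~\cite{BT19}, so no amplitude-estimation details need to be reproduced here.
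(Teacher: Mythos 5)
Your overall architecture matches the paper's: reduce to identifying the leaf, build an NBSP whose input vectors are edge vectors of $\mathcal T$ available exactly when the edge label agrees with the queried bit, take the positive witness to be the telescoping combination of the edges of $P_x$, and exhibit a single negative witness per input. You also correctly identify the crux, namely that uniform weights only give $O(\sqrt{GT})$ and that the edge vectors must be reweighted so each black run behaves like its own Grover segment. But the proposal stops exactly at the step that carries the whole theorem: you never specify a weighting, and the fix you sketch --- making the weight of an edge depend on ``the length of the black run it belongs to within its subtree'' --- is not well-defined, because the downstream portion of that run depends on the input: two inputs passing through the same edge can have their runs terminated by red edges at different depths, so $T_{g,x}$ is a per-input quantity that cannot be attached to an edge. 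Your bookkeeping is also off on the other side: you bound the negative witness size by the depth $T$, but then the geometric mean is $O\bigl(\sqrt{T\sum_g T_{g,x}}\bigr)$, which is strictly weaker than the claim (already for $G_x=0$ it gives $T$ instead of $\sqrt{T}$). To reach $O(\max_x\sum_g\sqrt{T_{g,x}})$ both the positive and the negative witness sizes must individually come out to $O(\sum_g\sqrt{T_{g,x}})$.

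The paper resolves the consistency problem by letting the weight depend only on \emph{upstream} data, which is unambiguous in a tree: each edge $(v,q)$ gets weight $W_{C(v,q),\,b(v)}$, where $b(v)$ counts the black edges since the last red edge on the root-to-$v$ path, and one chooses $W_{\mathrm{red},b}=1/W_{\mathrm{black},b}=\sqrt{b+1}-\sqrt{b}$. Then for every input the within-block sums telescope: the positive witness size is $\sum_g\bigl(1/W_{\mathrm{red},T_{g,x}}+\sum_{b<T_{g,x}}1/W_{\mathrm{black},b}\bigr)=O\bigl(\sum_g\sqrt{T_{g,x}}\bigr)$, and the negative witness $\sum_{v\in P_x}\ket{v}$ only picks up the weights of edges leaving $P_x$, giving $\sum_g\bigl(W_{\mathrm{black},T_{g,x}}+\sum_{b<T_{g,x}}W_{\mathrm{red},b}\bigr)=O\bigl(\sum_g\sqrt{T_{g,x}}\bigr)$ as well, so the NBSP complexity $\sqrt{\max_x\mathrm{wsize}^+(x)\cdot\max_x\mathrm{wsize}^-(x)}$ is as claimed. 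Without an explicit weight scheme of this kind (input-independent per edge, yet telescoping to $\sqrt{T_{g,x}}$ per block on both the positive and negative sides), your argument does not yet establish the stated bound.
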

Before getting into the proof of this theorem, let us make an intuition about its statement. Think of the guessing algorithm as a reference that answers to our queries. We look for its mistakes and  truncate the classical algorithm into $G+1$ parts, where $G$ is the number of mistakes that the guessing algorithm makes. In each of these truncated pieces, the guessing algorithm predicts all the necessary queries correctly, so in our quantum algorithm we can follow the classical one without making any queries. To count the number of queries, note that there exists a variant of Grover search algorithm that finds the first marked element in a list of $n$ elements making $O\left(\sqrt{j}\right)$ queries, where $j$ is the index of this marked element~\cite{Kot14}. Using this result we can find the $i$th mistake of the guessing algorithm using  $O\left(\sqrt{T_{i-1,x}}\right)$ queries. The total number of queries will then be $\sum_{g=0}^{G_x}\sqrt{T_{g,x}}\log n$, where the extra $\log n$ factor is needed for error reduction\footnote{The Grover search algorithm is a bounded error algorithm meaning that the probability of getting a correct answer  is at least $\frac23$. By successive calls of this algorithm on different sets, the errors in different outputs aggregate  and the probability of getting the correct final answer becomes increasingly small. To reduce this error we can repeat each Grover call $\log n$ times and get the majority vote. This makes the total error bounded with the cost of a $\log n$ factor}. In our proof we design a non-binary span program for this problem that mimics the behavior of the stated algorithm without getting this extra $\log n$ factor in its complexity.

\begin{figure}
 \includegraphics[scale=1,bb=320 0 165 165]{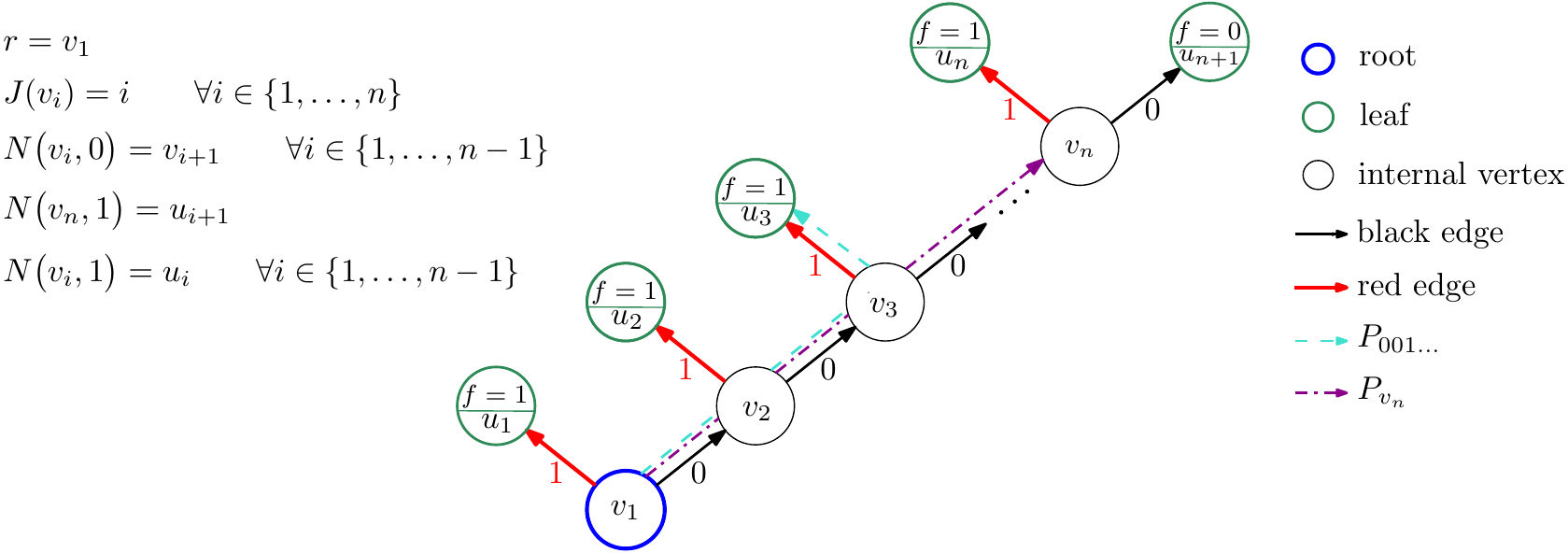}  
\caption{\label{fig:OrTree} A decision tree for the OR function. The output of this function is the OR of the input bits, so the output is 1 if and only if there exists an input index with value 1. The guessing algorithm always guesses that the output of the queried bit is 0 (black edges). Red edges (wrong guesses of the guessing algorithm) have label 1. The depth of this tree is $T=n$, which is the classical query complexity of this algorithm for the OR function. The maximum number of wrong guesses of the guessing algorithm is $G=1$, since after seeing the first mistake (red edge) the output of the function will be revealed. In the left, you can see all notations that are defined in the proof of Theorem~\ref{thm:binaryClassical2quantumW}.} 
\end{figure}

 To present this span program first we need to develop some notations. Let $V(\mathcal T)$ be the vertex set of $\mathcal T$. Then for every internal vertex $v\in V(\cT)$, its associated index is denoted by $J(v)$, i.e., $J(v)$ is the index $1\leq j\leq n$ that is queried by the classical algorithm at node $v$. The two outgoing edges of $v$ are indexed by elements of $\{0,1\}$ and connect $v$ to two other vertices. We denote these vertices by $N(v, 0)$ and $N(v, 1)$. That is, $N(v, q)$, for $q\in \{0,1\}$, is the next vertex that is reached from $v$ after following the outgoing edge with label $q$. We also represent the G-coloring of edges of $\cT$ by a function $C(v, q)\in \{\black, \red\}$ where $v$ is an internal vertex, $q\in \{0,1\}$. $C(v, q)$ is the color of the outgoing edge of $v$ with label $q$. Also let $b(v)$ be the number of black edges after the last red edge in the path from the root to the vertex $v$. These notations are depicted in an example in Figure~\ref{fig:OrTree}.

\begin{proof}
For every $x\in D_f$ there is an associated leaf of the tree $\cT$ that is reached once we follow edges of the tree with labels $x_j$, starting from the root. In order to find $f(x)$ it suffices to find this associated leaf because this is what the classical query algorithm does; once we find the leaf associated to $x$, we find the path that the classical query algorithm would take and then find $f(x)$. Thus in order to compute $f$, we may compute another function $\tilde f$ which given $x$ outputs its associated leaf of $\mathcal T$, and to prove an upper bound  on the quantum query complexity, it suffices to design an algorithm for $\tilde f$. 

We use the framework of non-binary span programs (NBSPs) for bounding the quantum query complexity of $\tilde f$. For more details on NBSPs we refer to~\cite{BT19} and here we only present the ingredients of an NBSP through the example of our particular function $\tilde f$:  
\begin{itemize}
\item The first ingredient of an NBSP is a finite-dimensional vector space  $\cV$ that is called the input space. Here, in our problem the input space is determined by the orthonormal basis indexed by vertices of the decision tree $\mathcal T$:
\begin{equation}
    \{\ket{v}\,|\, v\in V(\cT)\},
\end{equation} 
\item An NBSP contains some target vectors $|t_0 \rangle, |t_2 \rangle,\ldots ,|t_{m-1} \rangle\in \cV$, one for any possible value of the function. Here, the output values of $\tilde f$ are indexed by leaves $u$ of $\cT$ and we let 
\begin{equation}
    \ket{t_u}=\ket{r}-\ket{u},
\end{equation}
where $r\in V(\cT)$ is the root of the tree.

\item The input vectors $I_{j,q}$ of an NBSP are some subsets $I_{j,q}\subseteq \cV$ for every $1\leq j\leq n$ and $q\in \{0,1\}$. Here, the input sets are defined by
\begin{equation}
    I_{j,q}=\left\{\sqrt{W_{C(v, q),b(v)}}\big(\ket{v}-\ket{{N(v, q)}}\big)  \,\Big|\, \forall v\in V(\cT) \text{ s.t. } J(v)=j \right\},
\end{equation}
where $W_{\black,b}$ and $W_{\red,b}$ are positive real numbers to be determined~\footnote{Note that the weight of every edge $(v,q)$  not only depends on its color but also depends on $b(v)$, that is the number of black edges after the last red edge in the path from root to $v$.}.
\end{itemize}

Having all these ingredients, we
let $I\subseteq \cV$ be 
\begin{equation}
    I=\bigcup_{j=1}^n \bigcup_{q\in [\ell]} I_{j,q},
\end{equation}
and for every $x\in D_f$ we define the set of \emph{available vectors} $I(x)$  by
\begin{equation}
  I(x)=\bigcup_{j=1}^nI_{j,x_j}.  
\end{equation}
Now we say that the above NBSP evaluates $\tilde f$ if for every $x$ the target vector $\ket{t_{\alpha}}$ belongs to the span of available vectors $I(x)$ if and only if $\alpha=\tilde f(x)$. Furthermore, there should be \emph{negative} and \emph{positive witnesses} for this, that are explained below.

For every vertex $v$ of $\cT$, let $P_v$  be the (unique) path from the root $r$ to the vertex $v$. 
Then, for  every $x\in D_f$ there exists a path $P_x=P_{\tilde f(x)}$ from the root of the decision tree to the leaf $\tilde f(x)$. 

Thus, the target vector $\big|t_{\tilde f(x)}\big\rangle$ equals
\begin{align}\label{eq:positive-w-vector}
\big|t_{\tilde f(x)}\big\rangle= \ket r - \big|\tilde f(x)\big\rangle=\sum_{v\in P_{x}} \frac{1}{\sqrt{W_{C\left(v, x_{J(v)}\right),b(v)}}} \left\{ \sqrt{W_{C\left(v, x_{J(v)}\right),b(v)}} \left(\ket{v}-\ket{N(v,x_{J(v)})}\right)\right\},
\end{align}
where the vectors in the braces are all available for $x$. This shows that $\ket{t_{\tilde f(x)}}$ is the only target vector belonging to the span of $I(x)$ 
We show that there exists a negative witness for it, defined by 
\begin{equation}
    \ket{\bar{w}_x}=\sum_{v \in P_{x} } \ket{v}\in \cV.
\end{equation}
We note that that $\ket{\bar{w}_x}$ is orthogonal to all available vectors in $I(x)$. Moreover, it satisfies
$\bra{\bar w_x} t_\alpha\rangle=\bra{\bar w_x} r\rangle=1$ for every $\alpha\neq \tilde{f}(x)$. These two facts ensure that $\ket{t_\alpha}$ for $\alpha\neq \tilde f(x)$ does not belong to the span of $I(x)$.

Now based on the results of~\cite{BT19} the above NBSP gives a bound on the query complexity of $\tilde f$. To evaluate this bound we need to estimate two quantities called the \emph{positive complexity} and \emph{negative complexity} of the NBSP. The positive complexity denoted by $\mathrm{wsize}^+(P,w,\bar{w})$ is the maximum of the squared norm of the coefficient-vector in the expansion~\eqref{eq:positive-w-vector}.
For any $x$ this norm is computed as
\begin{equation}
\wsize^+(x)=\sum_{g=0}^{G_x}\left(\frac{1}{W_{\red,T_{g,x}}} +\sum_{b=0}^{T_{g,x}-1}\frac{1}{W_{\black,b}}\right).
\end{equation}

The negative complexity for any 
$x$ denoted by $\wsize^-(x)$  is equal to the sum of the squared inner product of $\ket{\bar w_x}$ with all vectors in $I$. To compute this quantity we need to compute the overlap of $\ket{\bar w_x}$ with all vectors of the form
\begin{equation}
    \sqrt{W_{C(v, q),b(v)}}\Big(\ket{v}-\ket{{N(v, q)}}\Big).
\end{equation}
We note that such a vector
contributes in the negative complexity (is not orthogonal to $\ket{\bar w_x}$) only if its corresponding edge $\{v, N(v, q)\}$ leaves the path $P_x$, i.e., they have only the vertex $v$ in common. In this case, the contribution would be equal to $W_{C(v, q),b(v)}$, which is the weight of that edge. Therefore, we have 
\begin{equation}
\wsize^-(x)=\sum_{g=0}^{G_x}\left(W_{\black,T_{g,x}} +\sum_{b=0}^{T_{g,x}-1}W_{\red,b}\right).
\end{equation}

Now letting $W_{\red,b}=\frac{1}{W_{\black,b}}=\sqrt{b+1}-\sqrt{b}$, both the positive and negative witnesses are bounded by
\begin{equation}
\sum_{g=0}^{G_x}\left(\frac{1}{\sqrt{T_{g,x}+1}-\sqrt{T_{g,x}}}+\sum_{b=0}^{T_{g,x}-1}\sqrt{b+1}-\sqrt{b}\right)
= O\left( \sum_{g=0}^{G_x}\sqrt{T_{g,x}}\right),
\end{equation}
where we used the fact that 
\begin{align}
\left(\sqrt{T_{g,x}+1}-\sqrt{T_{g,x}}\right)^{-1} =&
\left(\sqrt{T_{g,x}}\Big(\sqrt{\frac{T_{g,x}+1}{T_{g,x}}}-1\Big)\right)^{-1} \\=& \;
O\left(\frac{1}{\sqrt{T_{g,x}}}\Big(1+\frac{1}{2T_{g,x}}-1\Big)^{-1}\right)=
O\left(\sqrt{T_{g,x}}\right).
\end{align}
Putting these together the complexity of the NBSP is defined and bounded by
\begin{align}\label{eq:wsize-pwwb}
 \mathrm{wsize}=\sqrt{\max_x \mathrm{wsize}^+(x)\,\cdot\,\max_x\mathrm{wsize}^-(x)}=O\left(\max_x \sum_{g=0}^{G_x}\sqrt{T_{g,x}}\right). 
\end{align}
This is also a bound on the quantum query complexity of $\tilde f$, and that of $f$. 
\end{proof}

\section{Proof of the Main Result}
Recall that in the oracle identification problem  we are given a set $C\subseteq \{0,1\}^n$, where $|C|=M$, and query access to a string $x\in \{0,1\}^n$, with the promise that $x\in C$. We want to identify the input $x$ using the least number of queries to the input oracle. We use Theorem~\ref{thm:binaryClassical2quantumW} to design a quantum query algorithm for this problem. To apply this theorem we need a classical algorithm and a guessing algorithm, for which we borrow ideas from~\cite{Kot14}.  The classical algorithm chooses the order in which we query bits of $x$, and the guessing algorithm predicts the queries outputs based on the set $C$. In the beginning, any element in $C$ is a possible candidate for $x$. We need to choose an index of $x$ and predict a value for it in such a way that an incorrect guess eliminates the biggest portion of members of $C$ from our possible candidates. We continue removing impossible candidates form $C$ until we are left with a single element which is equal to $x$. Using this process we can find $x$ with minimum number of incorrect guesses.

The following lemma indicates this special order of queries and their predicted outcomes.

\begin{lemma}[\cite{Heg95}]\label{lem:oip-set} For any set $C\in\{0,1\}^n$, there exists a string $s\in\{0,1\}^n$ and a permutation $\pi:\{1,\ldots,n\}\to\{1,\ldots,n\}$, such that for any $j\in\{1,\ldots,n\}$, we have $|C_j|\leq \frac{|C|}{\max\{2,j\}}$, where 
\begin{equation}
    C_j=\big\{c\in C\big|\; c_{\pi(j)}\neq s_{\pi(j)}, c_{\pi(i)}=s_{\pi(i)} \; \forall i<j\big\}
\end{equation}
\end{lemma}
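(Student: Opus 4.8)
The plan is to prove the lemma by analysing the natural greedy ``halving'' strategy and then applying a one-line averaging argument. First I would specify the strategy that produces $s$ and $\pi$. Maintain a \emph{version space}: set $S_0=C$, and having fixed $\pi(1),\dots,\pi(j-1)$ together with the corresponding bits of $s$, let $S_{j-1}=\{c\in C: c_{\pi(i)}=s_{\pi(i)} \text{ for all } i<j\}$ be the candidates still consistent with the guesses made so far. Choose $\pi(j)$ to be an as-yet-unqueried index on which $S_{j-1}$ is split as evenly as possible, and let $s_{\pi(j)}$ be the corresponding majority bit; once $S_{j-1}$ is a singleton, set the remaining bits of $s$ to agree with that element and order the remaining indices arbitrarily. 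With this choice $C_j$ is exactly the ``minority part'' of $S_{j-1}$ under the query at coordinate $\pi(j)$, so writing $a_j:=|C_j|$ and letting $\mu_i(T)$ denote the minority count of a set $T$ on coordinate $i$, we have $a_j=\mu_{\pi(j)}(S_{j-1})$ and $|S_j|=|S_{j-1}|-a_j$, and the process reaches a singleton after at most $n$ steps.

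Next I would record two elementary facts. (i) The sets $C_1,C_2,\dots$ are pairwise disjoint subsets of $C$: indeed $C_j$ consists of elements disagreeing with $s$ at position $\pi(j)$, whereas every $C_k$ with $k>j$ consists of elements agreeing with $s$ at position $\pi(j)$; hence $\sum_j a_j\le |C|$. (ii) The sequence $(a_j)$ is non-increasing. For this, note that $\mu_i(T')\le\mu_i(T)$ whenever $T'\subseteq T$, since passing to a subset cannot increase either of the two counts whose minimum is the minority count; in particular $\mu_i(S_j)\le\mu_i(S_{j-1})$ for every $i$. Since the set of unqueried coordinates at step $j+1$ is contained in that at step $j$ and $\pi(j)$ was chosen to maximise $\mu_i(S_{j-1})$ over unqueried $i$, we get $a_{j+1}=\max_i\mu_i(S_j)\le\max_i\mu_i(S_{j-1})=a_j$, the maxima running over the relevant unqueried sets.

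Finally I would combine these. Because $(a_j)$ is non-increasing, $j\cdot a_j\le a_1+\dots+a_j\le\sum_k a_k\le|C|$, so $a_j\le|C|/j$ for every $j$; and for $j=1$ the majority bound gives $a_1\le|C|/2$ directly (the minority count on any single coordinate of $C$ is at most $|C|/2$). Together these read $|C_j|=a_j\le|C|/\max\{2,j\}$, and the tail steps, after the version space has shrunk to a singleton, contribute $C_j=\emptyset$, so the bound holds for all $j\in\{1,\dots,n\}$.

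I expect the one step worth stating carefully to be the monotonicity claim (ii): one must check both that shrinking a set cannot increase a coordinate's minority count and that deleting an already-used coordinate cannot increase the best available split. This is precisely what upgrades the trivial ``the total eliminated mass is at most $|C|$'' into the per-step harmonic bound $|C|/j$; everything else is bookkeeping, and in particular no induction on $|C|$ or $n$ is needed.
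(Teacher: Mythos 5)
Your proof is correct. Note that the paper itself does not prove this lemma: it imports it from Heged\H{u}s (via Kothari), where it arises as the ``best case of the halving algorithm,'' and your argument is essentially that standard one --- greedily query the unqueried coordinate on which the current version space is most balanced, guess the majority bit, observe that the sets $C_j$ are pairwise disjoint minority parts, and combine monotonicity of the minority counts with averaging to get $j\,|C_j|\le |C|$ and the majority bound $|C_1|\le |C|/2$. The step you single out as the crux is indeed the only nontrivial one, and your treatment of it is sound: passing to a subset cannot increase any coordinate's minority count, and the set of unqueried coordinates only shrinks, so the greedily chosen maxima are non-increasing; the tail behaviour once the version space is a singleton (so $C_j=\emptyset$) is also handled correctly. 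So there is no gap; your write-up is a valid self-contained proof of the cited lemma.
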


Now we have all ingredients to prove Theorem~\ref{thm:OIP}.

\begin{proof}[Proof of Theorem~\ref{thm:OIP}]
The classical algorithm for this problem is as follows.
 We first apply Lemma~\ref{lem:oip-set} to the set $C^{(1)}=C$ and obtain a string $s^{(1)}$ and a permutation $\pi_1$. We then query the indices one by one using the order defined by the permutation $\pi_1$ (the $i$-th query is the index $\pi_1(i)$). The guessing algorithm predicts the value of this query to be  $s^{(1)}_{\pi_1(i)}$. We continue to query until we see a mismatch between a queried value and the corresponding bit of $s^{(1)}$, that corresponds to a wrong guess. Suppose that the first mismatch is in the $p_1$-th query, so we have $x_{\pi_1(p_1)}\neq s^{(1)}_{\pi_1(p_1)}$. It is now guaranteed that $x\in C^{(1)}_{p_1}$ and we can eliminate at least a fraction $1-\frac{1}{\max\{2,p_1\}}$ of members of $C^{(1)}$ from possible candidates for $x$. We then apply Lemma~\ref{lem:oip-set} once again to the set $C^{(2)}=C^{(1)}_{p_1}$ and obtain a new string $s^{(2)}$ and a new permutation $\pi_2$. We then continue to query using the order defined by $\pi_2$ until we find a mismatch between a queried value and the string $s^{(2)}$. This again eliminates some members of $C^{(2)}$.  We continue this process until we reach some set $C^{(r)}$ with $|C^{(r)}|=1$ and output the only member of $C^{(r)}$ as the value of $x$. 

In the above algorithm, the number of wrong guesses (red edges observed in the decision tree) is $r$.
Moreover, using Theorem~\ref{thm:binaryClassical2quantumW}, the quantum query complexity of this problem is $\sum_{i=1}^{r-1}\sqrt{p_i}$. Indeed, to find an estimate on the quantum query complexity, we need to compute the maximum of this quantity over all possible choices of $r$ and $p_i$'s. To this end,
note that the total number of black edges in any path from the root to a leaf is at most $n$ and 
\begin{equation}\label{eq:con1}
\sum_{i=1}^{r-1} p_i \leq n.
\end{equation}
We also using Lemma~\ref{lem:oip-set} know that the $i$-th wrong guess reduces the size of $C^{(i)}$ by a factor of $\max\{2,p_i\}$. Therefore, we have \begin{align}\label{eq:con2}
\prod_{i=1}^{r-1} \max\{2, p_i\}\leq M.
\end{align}
Thus, using Theorem~\ref{thm:binaryClassical2quantumW} the quantum query complexity of this problem is the maximum value of \begin{equation}
    \sum_{i=1}^{r-1}\sqrt{p_i},
\end{equation}
subject to the constraints~\eqref{eq:con1} and~\eqref{eq:con2}.
Kothari showed that the optimal value of this optimization problem is bounded by $O\left(\sqrt{\frac{n\log M}{\log(n/\log M)+1}}\right)$. The proof is by finding the optimal value of  another linear program that upperbounds the optimal value of this optimization problem and is easier to solve (see  Appendix B of the full version of~\cite{Kot14} for more details). 

Putting these together we conclude that the quantum query complexity of the oracle identification problem is $O\left(\sqrt{\frac{n\log M}{\log(n/\log M)+1}}\right)$.
\end{proof}

As an example assume that $C=\{0000,0001,0011,0111,1111\}$. Figure~\ref{fig:Cexample} shows how using Theorem~\ref{thm:OIP} we can discover any given input $x\in C$. In the left we see possible updates in the set $C$ along the algorithm based on queries. In the right the associated decision tree is depicted. Note that similar colors in the left picture and the decision tree determines the same stage of the algorithm, for example in the root of the decision tree which we show using color blue, the set $C$ is the leftmost set in the left picture which has blue color too. We also color the index that we query in each step of the algorithm by red color in $C$. Vertices with green color are the leaves of the decision tree, which we label by an input $x$.

\begin{figure}
  \includegraphics[scale=.9,bb=200 0 340 340]{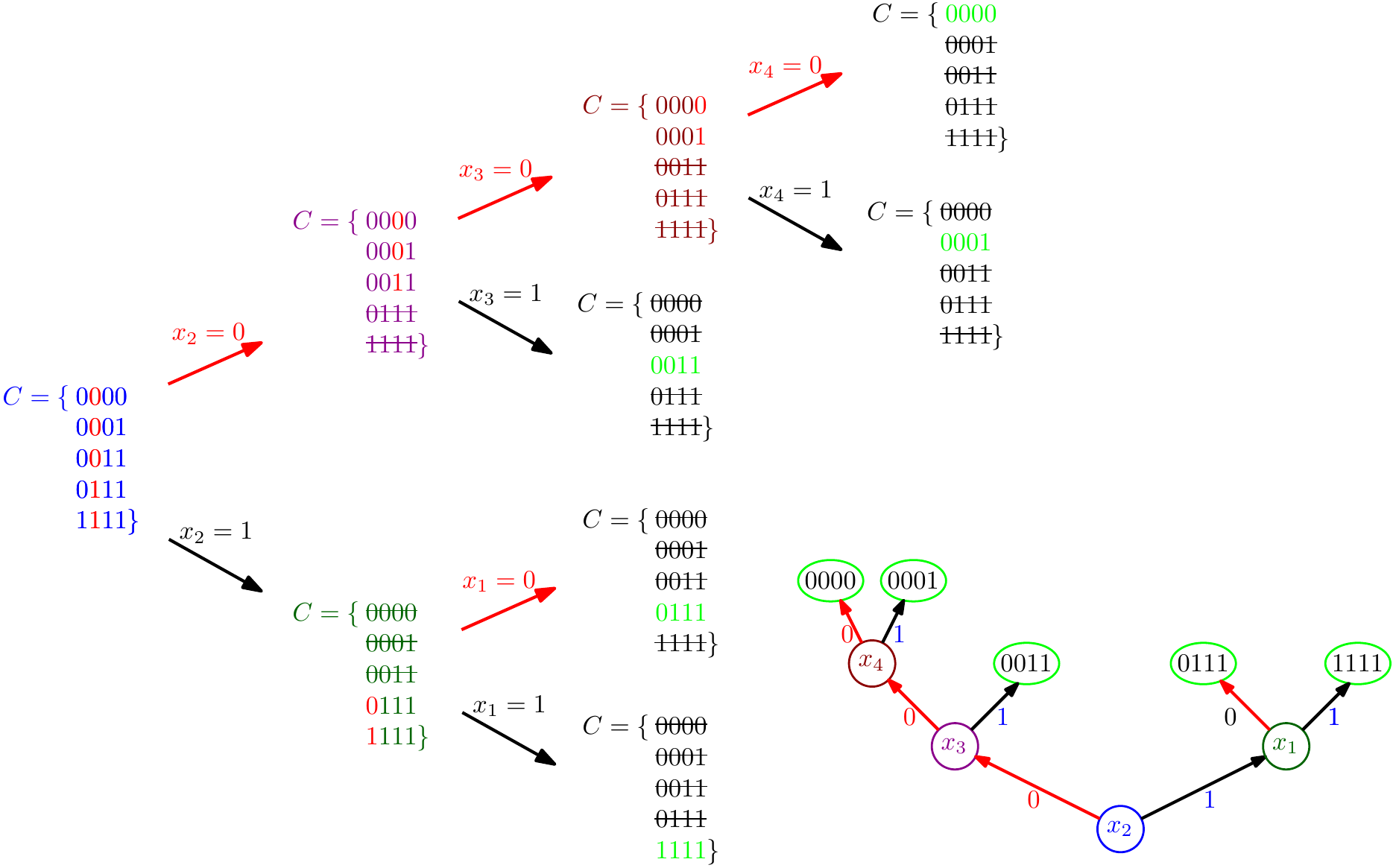}  
\caption{\label{fig:Cexample} An example for the decision tree and its coloring based on the algorithm of Theorem~\ref{thm:OIP}. In this example we assume that the input set $C$ is equal to $\{0000,0001,0011,0111,1111\}$. In the left we see the updates of the set $C$ after each query, the order in which we query is determined using Lemma~\ref{lem:oip-set}. For any update of the set $C$, the queried index which is the most informative bit based on lemma~\ref{lem:oip-set} is colored red. In the right the associated decision tree is depicted. Note that any colored node of the decision tree is associated to the set $C$ in the left with the same color.}  
\end{figure}

To conclude, note that the classical algorithm that we used here is based on Khotari's~\cite{Kot14}. This algorithm is based on repeated applications of the Grover search algorithm to find  the mismatches between $x$ and $s^{(i)}$ and then uses properties of the $\gamma_2$-norm to eliminate the $\log$ factor that has been added for error reduction. Using our approach this $\log$ factor does not appear in the query complexity in the first place. This can be generalized to any problem of this form, i.e., for any algorithm with repeated calls to Grover search we can use the same idea to eliminate the $\log$ factor we get for error reduction.

\paragraph{Acknowledgements} I would like to thank Salman Beigi for his helpful discussion and suggestions.

\bibliography{references}
\bibliographystyle{plain}



\end{document}